\newcommand{\tr}{\textnormal{Tr}}
\renewcommand{\d}[1]{\ensuremath{\textnormal{d}#1}}
\newcommand{\id}{\ensuremath{\mathds{1}}}
\newcommand{\cB}{\mathcal{B}}
\newcommand{\cH}{\mathcal{H}}
\newcommand{\cK}{\mathcal{K}}
\def\beq{\begin{equation}}
\def\eeq{\end{equation}}
\def\bq{\begin{quote}}
\def\eq{\end{quote}}
\def\ben{\begin{enumerate}}
\def\een{\end{enumerate}}
\def\bit{\begin{itemize}}
\def\eit{\end{itemize}}
\def\r|{\right|}
\def\one{\id}
\def\bu{{\bf u}}
\def\bv{{\bf v}}
\def\bx{{\bf x}}
\def\by{{\bf y}}
\def\bw{{\bf w}}
\newcommand\C{\mathbbm{C}}
\theoremstyle{plain}
\newtheorem{thm}{Theorem}[section]
\newtheorem{lem}[thm]{Lemma}
\theoremstyle{definition}
\newtheorem{rem}{Remark}[section]
\renewcommand{\leq}{\leqslant}
\renewcommand{\geq}{\geqslant}
\title{On an inequality of Lin, Kim and Hsieh and Strong Subadditivity}
\author{Eric A. Carlen$^{1}$ and Michael P. Loss$^{2}$\\
\\
\small{$^{1}$ Department of Mathematics, Hill Center,}\\[-6pt]
\small{Rutgers University, 110 Frelinghuysen Road Piscataway NJ 08854-8019 USA}\\[-6pt]\\
\small{$^{2}$ School of Mathematics, Georgia Tech, Atlanta GA 30332}\\[-6pt]
 }
\begin{document}

\maketitle

\footnotetext [1]{Work partially supported by U.S.
National Science Foundation grant  DMS-2055282.}

\footnotetext [2]{Work partially supported by U.S. National
Science Foundation
grant DMS-2154340.}

\begin{abstract}
We give an elementary proof of an inequality of  Lin, Kim and Hsieh that implies strong subadditivity of the non Neumann entropy.
\end{abstract}

\section{Introduction}

Let $\cH_1, \cH_2$ and $\cH_3$ be finite dimensional Hilbert spaces.   For a density matrix $\rho_{123}$ on 
$\cH_1\otimes\cH_2\otimes \cH_3$, 
let $\rho_{12} := \tr_3[\rho_{123}]$, the partial trace of $\rho_{123}$ over $\cH_3$, and $\rho_{1} := \tr_2[\rho_{12}] = \tr_{23}[\rho_{123}]$, etc. In 1968, Lanford and Robinson \cite{LR} conjectured that the von Neumann entropies of $\rho_{123}$, $\rho_2$, $\rho_{12}$ and $\rho_{23}$ are related by
\begin{equation}\label{SSA}
S(\rho_{12}) + S(\rho_{23}) \geq S(\rho_{123}) + S(\rho_2)\ .
\end{equation}
In 1970, while the conjecture was still open, Araki and Lieb \cite{AL} showed by an elementary purification argument that \eqref{SSA} is valid for all tripartite density matrices $\rho_{123}$ if and only if
\begin{equation}\label{LKH3}
S( \rho_{12}) + S(\rho_{23}) - S(\rho_1) - S(\rho_3) \geq 0
\end{equation}
holds for all tripartite density matrices $\rho_{123}$.  They did prove a weaker form of \eqref{SSA}, with 
$S(\rho_2)$ replaced by $\log(\tr[\rho_2^2])$, and showed that this sufficed to prove the thermodynamic limit that Lanford and Robinson had sought to prove. 
However, \eqref{SSA} is of interest for many reasons, and work towards proving it continued. 

To see that \eqref{SSA} and \eqref{LKH3} are equivalent, let $\cH_4$ be a fourth Hilbert space, and let $\Psi$ be a unit vector in $\cH_1\otimes \cH_2\otimes \cH_3\otimes \cH_4$ such that 
$$
\tr_4[|\Psi\rangle\langle\Psi|] = \rho_{123}\ .
$$
Such a unit vector $\Psi$ always exists. By a 1935 theorem of Schr\"odinger \cite{Scr35}, if $\cH$ and $\cK$ are two finite dimensional Hilbert spaces,
and $\Phi$ is any unit vector in $\cH\otimes \cK$, then there are orthonormal sets $\{\bu_1,\dots,\bu_r\}$ and $\{\bv_1,\dots,\bv_r\}$ in $\cH$ and $\cK$ respectively, and positive numbers $\lambda_1,\dots,\lambda_r$ with $\sum_{j=1}^r\lambda_j =1$ such that
\begin{equation}\label{Schmidt}
\Phi := \sum_{j=1}^r \lambda_j^{1/2} \bu_j\otimes \bv_j\ .
\end{equation}
The bi-orthogonal decomposition \eqref{Schmidt} is usually called the {\em Schmidt decomposition}. What Schmidt actually wrote 
about \cite{Sc07} was the closely related singular value decomposition for integral operators. 

Taking the two partial traces of the pure state $|\Phi\rangle\langle\Phi|$ yields
\begin{equation}\label{Sc}
\rho_\cH := \tr_\cK[|\Phi\rangle\langle\Phi|] = \sum_{j=1}^r \lambda_j |\bu_j\rangle\langle \bu_j| \quad{\rm and}\quad 
\rho_\cK := \tr_\cH[|\Phi\rangle\langle\Phi|] = \sum_{j=1}^r \lambda_j |\bv_j\rangle\langle \bv_j|\ .
\end{equation}
Thus, as observed in \cite{AL}, $\rho_\cH$ and $\rho_\cK$ have the same non-zero spectrum, including multiplicities. In particular, 
\begin{equation}\label{EqEnt}
S(\rho_\cH) = S(\rho_\cK)\ .
\end{equation}

To ``purify'' $\rho_{123}$, let  $\rho_{123} =
\sum_{j=1}^r\lambda_j |\bu_j\rangle\langle \bu_j|$ be a spectral decomposition of $\rho_{123}$, and let $\{\bv_1,\dots,\bv_r\}$ be an orthonormal set in 
a Hilbert space $\cH_4$ of dimension at least $r$. By \eqref{Sc},  
 $\Psi = \sum_{j=1}^r \lambda_j^{1/2}\bu_j\otimes \bv_j$ has the  property that $\tr_4[|\Psi\rangle\langle\Psi|] = \rho_{123}$. 
 Now one can consider $\cH_1\otimes\cH_2\otimes \cH_3\otimes \cH_4$ as a bipartite space $\cH\otimes \cK$ in various ways. By \eqref{Sc} we have
 \begin{equation}\label{IDENT}
 S(\rho_{123}) = S(\rho_4)\quad{\rm and}\quad S(\rho_{23}) = S(\rho_{14})\ .
 \end{equation}
 Thus, \eqref{SSA} becomes
 \begin{equation}\label{SSAV}
S(\rho_{12}) + S(\rho_{14}) \geq S(\rho_{4}) + S(\rho_2)\ ,
\end{equation}
which, up to relabeling, is the same as \eqref{LKH3}. Conversely, because of \eqref{IDENT}, once one has proved \eqref{SSAV}, one has proved \eqref{SSA}, as noted in \cite{AL}.  However, until the recent paper  \cite{LKH23}   there was no direct proof of \eqref{SSAV}.  The proof of \eqref{SSA} was given by Lieb and Ruskai \cite{LR} in 1973. Their proof used a deep  convexity result of Lieb, proved in the same year \cite{L73}, that resolved another famous conjecture, the Wigner-Yanase-Dyson Conjecture.  The following, rather surprising theorem was proved in \cite{LKH23}.

\begin{thm}[Lin, Kim and Hsieh]\label{LKHTh}  Let $\rho_{12}$ be an invertible density matrix on $\cH_1\otimes\cH_2$, and  let $\sigma_{23}$ be any density matrix on 
$\cH_2\otimes\cH_3$. Then
\begin{equation}\label{LKH1}
 \rho_1^{-1}\otimes \sigma_{23} \leq \rho_{12}^{-1}\otimes \sigma_3 \ .
\end{equation}
\end{thm}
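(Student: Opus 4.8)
The plan is to exploit the linearity of \eqref{LKH1} in $\sigma_{23}$ and then reduce the resulting operator inequality to an elementary completion-of-squares estimate. First I would note that $\rho_1$ is automatically invertible: if $\rho_1\xi=0$ then $\sum_a\langle\xi\otimes a|\rho_{12}|\xi\otimes a\rangle=0$ over an orthonormal basis $\{a\}$ of $\cH_2$, forcing $\rho_{12}(\xi\otimes\cH_2)=0$ and contradicting invertibility of $\rho_{12}$, so $\rho_1^{-1}$ is well defined. Writing $\sigma_3=\tr_2[\sigma_{23}]$, both $\rho_{12}^{-1}\otimes\sigma_3$ and $\rho_1^{-1}\otimes\sigma_{23}$ are linear in $\sigma_{23}$, and the density matrices on $\cH_2\otimes\cH_3$ are the convex hull of the pure states; since the positive-semidefinite cone is convex, it suffices to prove \eqref{LKH1} for $\sigma_{23}=|w\rangle\langle w|$ with $w\in\cH_2\otimes\cH_3$ a unit vector, the general case following by convex combination. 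Fixing the Schmidt decomposition $w=\sum_j\mu_j^{1/2}\,a_j\otimes b_j$ with $\{a_j\}\subset\cH_2$, $\{b_j\}\subset\cH_3$ orthonormal and $\mu_j>0$, one has $\sigma_3=\sum_j\mu_j|b_j\rangle\langle b_j|$.

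To prove the operator inequality I would test it against an arbitrary $\Psi\in\cH_1\otimes\cH_2\otimes\cH_3$ and set $\phi:=(\id_1\otimes\langle w|)\Psi\in\cH_1$, so that $\langle\Psi|(\rho_1^{-1}\otimes|w\rangle\langle w|)|\Psi\rangle=\langle\phi|\rho_1^{-1}|\phi\rangle$. The key is the variational identity $\langle\phi|\rho_1^{-1}|\phi\rangle=\max_{\xi\in\cH_1}\big(2\,\mathrm{Re}\langle\xi|\phi\rangle-\langle\xi|\rho_1|\xi\rangle\big)$, valid for any invertible $\rho_1>0$. Using $\langle\xi|\phi\rangle=\langle\xi\otimes w|\Psi\rangle$, it is therefore enough to establish, for every $\xi\in\cH_1$, the scalar inequality
\[
2\,\mathrm{Re}\langle\xi\otimes w|\Psi\rangle \;\leq\; \langle\xi|\rho_1|\xi\rangle + \langle\Psi|(\rho_{12}^{-1}\otimes\sigma_3)|\Psi\rangle ,
\]
since taking the maximum over $\xi$ then gives $\langle\Psi|(\rho_{12}^{-1}\otimes\sigma_3)|\Psi\rangle\ge\langle\phi|\rho_1^{-1}|\phi\rangle$ for all $\Psi$, which is exactly the desired inequality.

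The displayed estimate I would obtain by a Cauchy--Schwarz/AM--GM argument with a carefully chosen factorization. Put $A:=\rho_{12}^{1/2}\otimes\sigma_3^{-1/2}$ and $B:=\rho_{12}^{-1/2}\otimes\sigma_3^{1/2}$, where $\sigma_3^{-1/2}$ is the inverse on the support of $\sigma_3$; since $w$ lies in $\cH_2\otimes\mathrm{supp}(\sigma_3)$ one checks $\langle A(\xi\otimes w)|B\Psi\rangle=\langle\xi\otimes w|\Psi\rangle$, whence $2\,\mathrm{Re}\langle\xi\otimes w|\Psi\rangle\le\|A(\xi\otimes w)\|^2+\|B\Psi\|^2$. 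Here $\|B\Psi\|^2=\langle\Psi|(\rho_{12}^{-1}\otimes\sigma_3)|\Psi\rangle$ exactly, while the crucial point is that $\sigma_3^{-1}$ cancels the Schmidt weights:
\[
\|A(\xi\otimes w)\|^2 = \langle\xi\otimes w|(\rho_{12}\otimes\sigma_3^{-1})|\xi\otimes w\rangle = \sum_j\langle\xi\otimes a_j|\rho_{12}|\xi\otimes a_j\rangle \leq \langle\xi|\rho_1|\xi\rangle ,
\]
the final step holding because $\{a_j\}$ is an orthonormal set (not necessarily a basis) in $\cH_2$ and $\rho_{12}\ge0$, so this partial sum is dominated by the full partial trace $\langle\xi|\tr_2[\rho_{12}]|\xi\rangle$.

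I expect the main obstacle to be precisely this last computation: choosing the factorization $A,B$ so that Cauchy--Schwarz is tight enough for $\sigma_3^{-1}$ to undo the Schmidt weighting of $w$ exactly, converting a $\mu_j$-weighted sum into an unweighted partial trace bounded by $\rho_1$. The reduction to a pure $\sigma_{23}$ is what makes this possible, since only for a pure $w$ do $\sigma_3$ and $w$ carry a matched Schmidt structure. The remaining care concerns the possible non-invertibility of $\sigma_3$, handled by restricting to its support; alternatively the whole argument can be phrased via the Schur-complement equivalence, under which \eqref{LKH1} for pure $\sigma_{23}$ is equivalent to $(\id_1\otimes\langle w|)(\rho_{12}\otimes\sigma_3^{-1})(\id_1\otimes|w\rangle)\le\rho_1$, which is the same computation.
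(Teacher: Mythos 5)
Your proof is correct, and it takes a genuinely different --- and leaner --- route than the paper's. The paper establishes a lemma for \emph{pure} $\rho_{12}$ and pure $\sigma_{23}$ in the un-inverted form \eqref{LKH5}, which forces it to regularize the rank-one $\sigma_{23}$ by $X_{23}(\epsilon)$, tolerate a $(1+\sqrt{\epsilon})$ slack factor and dimension-dependent constants ($d_2d_3$ and the least eigenvalue of $\sigma_3$), invoke anti-monotonicity of the operator inverse to flip the inequality, and then run two limiting arguments ($\epsilon\to 0$, and density of pure states with $\sigma_3$ invertible) before the final convexity step in $\sigma_{23}$. You instead observe that \eqref{LKH1} itself is linear in $\sigma_{23}$ --- so the only reduction needed is to pure $\sigma_{23}=|w\rangle\langle w|$, with $\rho_{12}$ kept general --- and you linearize the troublesome inverse via the completion-of-squares identity $\langle\phi|\rho_1^{-1}|\phi\rangle=\max_{\xi}\big(2\,\mathrm{Re}\langle\xi|\phi\rangle-\langle\xi|\rho_1|\xi\rangle\big)$, which converts the operator inequality into a scalar estimate settled by Cauchy--Schwarz with the factorization $A=\rho_{12}^{1/2}\otimes\sigma_3^{-1/2}$, $B=\rho_{12}^{-1/2}\otimes\sigma_3^{1/2}$ (so that $AB=\one\otimes P_3$, with $P_3$ the support projection of $\sigma_3$, fixes $\xi\otimes w$). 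The engine of both proofs is the same: your computation $\langle\xi\otimes w|(\rho_{12}\otimes\sigma_3^{-1})|\xi\otimes w\rangle=\sum_j\langle\xi\otimes a_j|\rho_{12}|\xi\otimes a_j\rangle\leq\langle\xi|\rho_1|\xi\rangle$, in which $\sigma_3^{-1}$ exactly cancels the Schmidt weights of $w$, is precisely the paper's key claim \eqref{LKH61} (proved there only for pure $\rho_{12}$, via the matrix $M\leq\one$). What your route buys is substantial: no regularization, no slack factor or dimensional constants, no appeal to operator monotonicity of the inverse, no limits or density arguments, and singular $\sigma_3$ handled directly on its support; your closing Schur-complement remark is likewise a correct restatement of the same computation. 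What the paper's route retains is only that it stays entirely at the level of quadratic-form testing without the variational identity --- but that identity is itself elementary (complete the square), so your argument is, if anything, closer to the ``little more than Cauchy--Schwarz and Schmidt decompositions'' promised in the introduction.
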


As a direct corollary, one obtains \eqref{LKH3}, and hence strong subadditivity. 
Since the logarithm is operator monotone,
\begin{equation}\label{LKH2}
\log \rho_{12} + \log \sigma_{23} - \log \rho_1 - \log \sigma_3 \leq 0\ .
\end{equation}
Now let $\rho_{123}$ be a density matrix on $\cH_1\otimes \cH_2\otimes \cH_3$, and apply \eqref{LKH2} with $\rho_{12}= \tr_3[\rho_{123}]$ and $\sigma_{23} = \tr_1[\rho_{123}] = \rho_{23}$. Then taking the trace against $-\rho_{123}$ yields \eqref{LKH3}. The condition that $\rho_{12}$ be invertible is removed by a simple continuity argument \cite{AF04}.

The elegant proof of \eqref{LKH2} in \cite{LKH23} uses a ``twisted'' pair of Stinespring factorizations of a completely positive map to prove that 
$$
\rho_{12}^{1/2}\rho_1^{-1/2} (\sigma_{23}^{1/2}\sigma_3^{-1/2})^* = \rho_{12}^{1/2} \sigma_3^{-1/2} \rho_1^{-1/2} \sigma_{23}^{1/2}
$$
is a contraction on $\cH_1\otimes \cH_2\otimes \cH_3$, and this is equivalent to the statement that 
$$
(\rho_1^{-1} \sigma_{23})^{1/2} \rho_{12} \sigma_3^{-1} (\rho_1^{-1} \sigma_{23})^{1/2} \leq \one\ ,
$$
which in turn is equivalent to \eqref{LKH1}. In this note we give an elementary direct proof of \eqref{LKH1} using little more than the Cauchy-Schwarz inequality and Schmidt decompositions.

\begin{rem}\label{Rem1} The inequality \eqref{LKH1} extends to an inequality for positive operators $X$ on $\cH_1\otimes \cH_2$ and $Y$ on $\cH_2\otimes \cH_3$ since
we may normalize these to form $\rho_{12} = (\tr[X])^{-1}X$ and $\sigma_{23} :=  (\tr[Y])^{-1}Y$, and the normalization factors may be cancelled from both sides.  The normalization is only relevant to the application. 

Next, while there is no restriction on the finite dimensions of the Hilbert spaces $d_1$, $d_2$ and $d_3$, we may freely assume that 
$d_2 \geq \max\{d_1,\d_3\}$.  To see this, let $\widetilde \cH_2$ be a Hilbert space that contains $\cH_2$ as a proper subspace. 
Let $P$ be the projection onto $\cH_2$ in $\widetilde \cH_2$. If $\rho_{12}$ is invertible on $\cH_1\otimes \cH_2$, $\rho_{12}(\epsilon) :=\rho_{12}+\epsilon \one_1\otimes P$ is invertible on
$\cH_1\otimes \widetilde \cH_2$ for all $\epsilon>0$.  Of  course, any density matrix $\sigma_{23}$ on $\cH_2\otimes \cH_3$ may be regarded as a density matrix on $\widetilde \cH_2\otimes \cH_3$. Then the restriction of $\rho_{12}^{-1}(\epsilon)\otimes \sigma_3 \geq \rho_1^{-1}(\epsilon)\otimes \sigma_{23}$
to $\cH_1\otimes \cH_2\otimes \cH_3$ reduces to \eqref{LKH1} as $\epsilon$ decreases to zero. 
\end{rem}

\section{Proof of the inequality of Lin, Kim and Hsieh}

\begin{lem}\label{key} Let $\Psi$ and $\Phi$ be unit vectors in $\cH_1\otimes \cH_2$ and $\cH_2\otimes \cH_3$ respectively, and let $\rho_{12} = 
|\Psi \rangle\langle \Psi|$ and $\sigma_{23} = |\Phi \rangle\langle \Phi|$. Suppose that $\sigma_3$ is invertible. 
Let  $P$ denote the projection into the orthogonal complement of $\Phi$ in $\cH_2\otimes \cH_3$. For $\epsilon>0$, define
\begin{equation}\label{LKH4}
X_{23}(\epsilon) = |\Phi \rangle\langle \Phi| + \epsilon  P\ .
\end{equation}
Then   for all $\epsilon$ sufficiently small, 
\begin{equation}\label{LKH5}
\rho_{12}\otimes \sigma_3^{-1} \leq \left(1 +\sqrt{\epsilon}\right)\rho_1\otimes X_{23}(\epsilon)^{-1}\ .
\end{equation}
\end{lem}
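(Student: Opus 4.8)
The plan is to prove \eqref{LKH5} as an operator inequality on $\cH_1\otimes\cH_2\otimes\cH_3$ by testing it against an arbitrary vector $\chi$, i.e.\ by showing $\langle\chi|\,\rho_{12}\otimes\sigma_3^{-1}\,|\chi\rangle\leq(1+\sqrt\epsilon)\langle\chi|\,\rho_1\otimes X_{23}(\epsilon)^{-1}\,|\chi\rangle$ for all $\chi$. I would fix Schmidt decompositions $\Psi=\sum_i\sqrt{p_i}\,e_i\otimes f_i$ and $\Phi=\sum_j\sqrt{q_j}\,g_j\otimes h_j$, so that $\rho_1=\sum_i p_i|e_i\rangle\langle e_i|$ and $\sigma_3=\sum_j q_j|h_j\rangle\langle h_j|$. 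Invertibility of $\sigma_3$ forces $\{h_j\}$ to be an orthonormal basis of $\cH_3$ with all $q_j>0$, hence $\sigma_3^{-1}=\sum_j q_j^{-1}|h_j\rangle\langle h_j|$; and since $|\Phi\rangle\langle\Phi|$ and $P$ are complementary orthogonal projections, $X_{23}(\epsilon)^{-1}=|\Phi\rangle\langle\Phi|+\epsilon^{-1}P$.

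The first computational step is to turn both sides into explicit sums. Writing $w:=\langle\Psi|_{12}\chi\in\cH_3$ and $v:=\langle\Phi|_{23}\chi\in\cH_1$ for the partial inner products, the left side is $\langle w|\sigma_3^{-1}|w\rangle=\sum_j q_j^{-1}|\langle\Psi\otimes h_j|\chi\rangle|^2$. For the right side, $\rho_1\otimes X_{23}(\epsilon)^{-1}=\rho_1\otimes|\Phi\rangle\langle\Phi|+\epsilon^{-1}\rho_1\otimes P$, and a direct calculation gives $\langle\chi|\rho_1\otimes|\Phi\rangle\langle\Phi||\chi\rangle=\langle v|\rho_1|v\rangle=:A$ and $\langle\chi|\rho_1\otimes P|\chi\rangle=\langle\chi_\perp|\rho_1\otimes\id|\chi_\perp\rangle=:C\geq0$, where $\chi_\perp:=(\id_1\otimes P)\chi$. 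Thus the target inequality becomes $\sum_j q_j^{-1}|\langle\Psi\otimes h_j|\chi\rangle|^2\leq(1+\sqrt\epsilon)\bigl(A+\epsilon^{-1}C\bigr)$.

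Next I would decompose $\chi=v\otimes\Phi+\chi_\perp$ and expand each overlap accordingly, obtaining $\langle\Psi\otimes h_j|\chi\rangle=\alpha_j+\beta_j$ with $\alpha_j=\sqrt{q_j}\langle\Psi|v\otimes g_j\rangle$ and $\beta_j=\langle\Psi\otimes h_j|\chi_\perp\rangle$. The elementary bound $|\alpha_j+\beta_j|^2\leq(1+\sqrt\epsilon)|\alpha_j|^2+(1+\epsilon^{-1/2})|\beta_j|^2$, which follows from $2\Re(\bar\alpha_j\beta_j)\leq\sqrt\epsilon|\alpha_j|^2+\epsilon^{-1/2}|\beta_j|^2$, then separates the two contributions. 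For the diagonal term, setting $\psi(v):=\langle v|_1\Psi\in\cH_2$ one has $q_j^{-1}|\alpha_j|^2=|\langle g_j|\psi(v)\rangle|^2$, and Bessel's inequality for the orthonormal family $\{g_j\}$ gives $\sum_j q_j^{-1}|\alpha_j|^2\leq\|\psi(v)\|^2=\langle v|\rho_1|v\rangle=A$, which is exactly the required $(1+\sqrt\epsilon)A$ term. Everything therefore reduces, after cancelling the common factor $(1+\sqrt\epsilon)$, to proving $\sum_j q_j^{-1}|\beta_j|^2\leq\epsilon^{-1/2}C$ for all small $\epsilon$.

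The main obstacle is this last estimate, since both $B:=\sum_j q_j^{-1}|\beta_j|^2$ and $C$ are fixed quadratic forms in $\chi$ while $\epsilon^{-1/2}\to\infty$, so the bound can only hold if $B$ is controlled by $C$ uniformly in $\chi$. The key observation is that the null space of $C$ is contained in that of $B$: if $C=0$ then $\chi_\perp\in\ker(\rho_1)\otimes\cH_2\otimes\cH_3$, whereas each $\Psi\otimes h_j$ lies in $\mathrm{range}(\rho_1)\otimes\cH_2\otimes\cH_3$ (because $\Psi=\sum_i\sqrt{p_i}\,e_i\otimes f_i$ with $e_i\in\mathrm{range}(\rho_1)$), forcing every $\beta_j=0$ and hence $B=0$. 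Two positive semidefinite forms on a finite-dimensional space with nested kernels satisfy $B\leq K\,C$ for a constant $K=\|B\|/\lambda_{\min}^{+}(C)$ (with $\lambda_{\min}^{+}(C)$ the smallest strictly positive eigenvalue of $C$) that is independent of $\chi$; choosing any $\epsilon\leq K^{-2}$ then yields $B\leq\epsilon^{-1/2}C$ and closes the argument. The delicate point to get right is precisely that $K$ does not depend on $\chi$, which is exactly what forces the $\sqrt\epsilon$-perturbation and the hypothesis ``for all $\epsilon$ sufficiently small.''
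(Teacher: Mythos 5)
Your proof is correct, and it follows the same overall strategy as the paper: evaluate both sides on an arbitrary vector, split that vector along $\Phi$ versus its orthogonal complement, absorb the cross terms with a $\sqrt{\epsilon}$-weighted arithmetic--geometric mean inequality, and prove as the key estimate that the $\Phi$-diagonal contribution of $\rho_{12}\otimes\sigma_3^{-1}$ is dominated by $\langle v,\rho_1 v\rangle$ --- your Bessel-inequality step for the family $\{g_j\}$ is exactly the paper's inequality \eqref{LKH61}, which it proves by exhibiting the Gram-type matrix $M$ and noting $0<M\leq\one$. There are two genuine differences in implementation. First, the paper expands over a full orthonormal basis $\{\Phi_\ell\}$ of $\cH_2\otimes\cH_3$ with $\Phi_1=\Phi$, which produces cross terms between all pairs $(\ell,\ell')$ and forces the bookkeeping in \eqref{LKH62}; your two-block decomposition $\chi=v\otimes\Phi+\chi_\perp$ is cleaner, since the only cross terms are between the two blocks. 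Second, for the perpendicular block the paper derives an explicit constant, $d_3\mu^{-1}$ per term with $\mu$ the least eigenvalue of $\sigma_3$, via $A\leq\mu^{-1}\rho_{12}\otimes\one$ and a partial-trace estimate, yielding an explicit admissible range of $\epsilon$; you instead invoke the abstract fact that two positive semidefinite forms on a finite-dimensional space with nested kernels satisfy $B\leq K\,C$ for some finite $K$, where the kernel nesting follows because each $\Psi\otimes h_j$ lies in $\mathrm{range}(\rho_1)\otimes\cH_2\otimes\cH_3$. Your argument is correct (the operators $B_{\mathrm{op}}=(\one\otimes P)(\rho_{12}\otimes\sigma_3^{-1})(\one\otimes P)$ and $C_{\mathrm{op}}=\rho_1\otimes P$ are fixed, so $K$ is indeed independent of $\chi$) and avoids dimension counting, at the cost of being non-constructive about how small $\epsilon$ must be; if you wanted explicitness you could note $\|B_{\mathrm{op}}\|\leq\mu^{-1}$ and $\lambda_{\min}^{+}(C_{\mathrm{op}})=p_{\min}^{+}(\rho_1)$, or replace the abstract step by the elementary bound $\rho_{12}\leq d_1\,\rho_1\otimes\one_2$, which gives $B\leq d_1\mu^{-1}C$ directly.
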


\begin{rem} Since $\sigma_2$ and $\sigma_3$ have the same rank, $\sigma_3$ can only be invertible if $d_2 \geq d_3$. 
\end{rem}

\begin{proof}[Proof of Theorem~\ref{LKHTh}] By Remark~\ref{Rem1}, we may assume $d_2 \geq d_3$.
Since both sides of \eqref{LKH5} are linear in $\rho_{12}$, and since any density matrix is a convex combination of pure states, \eqref{LKH5} is valid for all density matrices $\rho_{12}$ on $\cH_1\otimes \cH_2$. Then by the monotonicity of the inverse function, if $\rho_{12}$ is invertible,
\begin{equation}\label{LKH6}
\rho_{12}^{-1}\otimes \sigma_3 \geq \left(1 + \sqrt{\epsilon}\right)^{-1}\rho_1^{-1}\otimes X_{23}(\epsilon)\ .
\end{equation}
Now taking $\epsilon$ to zero, we recover \eqref{LKH1} for pure states $\sigma_{23}$ such that $\sigma_3$ is invertible. Such pure states are dense by \eqref{Schmidt} and \eqref{Sc}, and again, since any density matrix on $\cH_2\otimes \cH_3$ is a convex combination of pure states, it follows that \eqref{LKH1} is valid for all density matrices $\sigma_{23}$. 
\end{proof}

\begin{proof}[Proof of Lemma~\ref{key}]   Let $\{\Phi_j\}_{1 \leq j \leq d_2d_3}$ be an orthonormal basis for $\cH_2\otimes \cH_3$ in which $\Phi_1 = \Phi$. 
The general unit vector $\varphi\in \cH_1\otimes\cH_2\otimes \cH_3$ has the form $\varphi = \sum_{\ell =1}^{d_2d_3} \bw_\ell\otimes \Phi_\ell$ where
$\sum_{\ell =1}^{d_2d_3} \|\bw_\ell\|^2 =1$.  Then with $B := \rho_1\otimes X_{23}(\epsilon)^{-1}  $,
\begin{equation}\label{LKH8}
\langle \varphi \ ,\ B \varphi\rangle =  \langle\bw_1,\rho_1\bw_1\rangle +
\frac{1}{\epsilon} \sum_{\ell=2}^{d_2d_3}  \langle\bw_\ell,\rho_1\bw_\ell\rangle\ .
\end{equation}

We must compare this with $\langle \varphi, A\varphi\rangle$, where  $A := \rho_{12}\otimes \sigma_3^{-1}$. Recall that for all $a,b,\delta>0$,
$ab  \leq \frac 12(\delta a^2 + \delta^{-1}b^2)$. Hence for
 any $\delta>0$,
\begin{eqnarray}\label{LKH62}
\langle \varphi, A\varphi\rangle &=& \sum_{\ell,\ell'=1}^{d_2d_3} \langle \bw_\ell\otimes \Phi_\ell, A \bw_{\ell'}\otimes\Phi_{\ell'}\rangle\nonumber\\
&=& \langle \bw_1\otimes \Phi, A \bw_{1}\otimes\Phi\rangle + 2\sum_{\ell=2}^{d_2d_3}{\rm Re} \langle \bw_1\otimes \Phi, A \bw_{\ell}\otimes\Phi_{\ell}\rangle
+  \sum_{\ell,\ell'=2}^{d_2d_3} \langle \bw_\ell\otimes \Phi_\ell, A \bw_{\ell'}\otimes\Phi_{\ell'}\rangle\nonumber\\
&\leq& (1+\delta d_2d_3)\langle \bw_1\otimes \Phi, A \bw_{1}\otimes\Phi\rangle + \left(\frac{1}{\delta} +d_2d_3\right)
\sum_{\ell=2}^{d_2d_3}\langle \bw_\ell\otimes \Phi_\ell, A \bw_{\ell}\otimes\Phi_{\ell}\rangle\ .
\end{eqnarray}
Next, with $\mu$ denoting the least eigenvalue of $\sigma_3$,  $A\leq \mu^{-1}\rho_{12}\otimes \one$. Therefore,
\begin{eqnarray*}
\langle \bw_\ell\otimes \Phi_\ell, A \bw_{\ell}\otimes\Phi_{\ell}\rangle &\leq& 
\mu^{-1} \langle \bw_\ell\otimes \Phi_\ell, \rho_{12}\otimes \one \bw_{\ell}\otimes\Phi_{\ell}\rangle\\
&\leq& \mu^{-1} \sum_{\ell' =1}^{d_2d_3} \langle \bw_\ell\otimes \Phi_{\ell'}, \rho_{12}\otimes \one \bw_{\ell}\otimes\Phi_{\ell'}\rangle \leq 
\mu^{-1}\langle \bw_\ell,\tr_{23}[\rho_{12}\otimes \one]\bw_\ell\rangle \\
&=& d_3 \mu^{-1}\langle \bw_\ell,\rho_1\bw_\ell\rangle\ ,
\end{eqnarray*}
where final inequality comes from $\langle \bw_\ell,\tr_{23}[\rho_{12}\otimes \one]\bw_\ell\rangle =\sum_{\ell' =1}^{d_2d_3}
 \langle \bw_\ell\otimes \Phi_{\ell'}, \rho_{12}\otimes \one \bw_{\ell}\otimes\Phi_{\ell'}\rangle$.
Combining this with \eqref{LKH62} yields
\begin{equation}\label{LKH63}
\langle \varphi, A\varphi\rangle \leq (1+\delta d_2d_3)  \langle \bw_1\otimes \Phi, A \bw_{1}\otimes\Phi\rangle + 
d_3\mu^{-1}  \left(\frac{1}{\delta} +d_2d_3\right) \sum_{\ell=2}^{d_2d_3}\langle \bw_\ell,\rho_1\bw_\ell\rangle\ .
\end{equation}

We now claim that
\begin{equation}\label{LKH61}
\langle \bw_1\otimes \Phi, A \bw_{1}\otimes\Phi\rangle \leq \langle \bw_1,\rho_1\bw_1\rangle\ .
\end{equation}
Accepting this, choose $\delta = \sqrt{\epsilon}/ d_2d_3$ and then \eqref{LKH63} becomes
\begin{equation}\label{LKH67}
\langle \varphi, A\varphi\rangle \leq  (1+\sqrt{\epsilon})\langle \bw_1,\rho_1\bw_1\rangle + \frac{d_2d_3^2}{\mu}\left(\frac{1}{\sqrt{\epsilon}} +1\right)
\sum_{\ell=2}^{d_2d_3}\langle \bw_\ell,\rho_1\bw_\ell\rangle\ .
\end{equation}
Then from \eqref{LKH8}, 
${\displaystyle
\langle \varphi, A\varphi\rangle \leq  (1+ \sqrt{\epsilon})\langle \varphi, B\varphi\rangle
}$  for all sufficiently small $\epsilon$.

To prove \eqref{LKH61},
let ${\displaystyle \Psi = \sum_{j=1}^{d_1} \lambda_j^{1/2} \bu_j\otimes \bv_j }$ and ${\displaystyle  \Phi = \sum_{k=1}^{d_3} \mu_k^{1/2} \bx_k\otimes \by_k }$
be Schmidt decompositions of $\Psi$ and $\Phi$. 
Then $\rho_1 = \sum_{j=1}^{d_1}\lambda_j|\bu_j\rangle\langle \bu_j|$\ ,  and
${\displaystyle A = \sum_{j,j',k} \lambda_j^{1/2} \lambda_{j'}^{1/2}
\mu_k^ {-1} |\bu_j\otimes \bv_j\otimes\by_k\rangle\langle  \bu_{j'}\otimes \bv_{j'}\otimes\by_k|}$.
Writing out $\langle \bw_1\otimes \Phi, A \bw_1\otimes \Phi\rangle$ using the Schmidt decomposition for $\Phi$ yields the sum
\begin{equation*}
\sum_{i,i'j,j',k} \lambda_j^{1/2} \lambda_{j'}^{1/2}
\mu_k^{-1}\mu_{i}^{1/2}\mu_{i'}^{1/2}  \langle  \bu_{j'}\otimes \bv_{j'}\otimes\by_k, \bw_1\otimes \bx_i\otimes \by_i\rangle 
\langle \bw_1\otimes \bx_{i'}\otimes \by_{i'}, \bu_{j}\otimes \bv_{j}\otimes\by_k\rangle\ 
\end{equation*}
in which
only the terms with $i= i' = k$ can be non-zero, and hence the sum reduces to 
\begin{equation*}
\sum_{j,j',k} \lambda_j^{1/2} \lambda_{j'}^{1/2}
 \langle \bw_1,\bu_j\rangle\langle\bu_{j'},\bw_1\rangle \langle \bv_{j'},\bx_k\rangle \langle\bx_k,\bv_j\rangle 
 \ .
\end{equation*}
 Define the $d_1\times d_1$ matrix $M$ with entries $M_{j',j} :=  \sum_{k=1}^{d_3}\langle \bv_{j'},\bx_k\rangle \langle\bx_k,\bv_j\rangle$ and  the vector ${\bf a}\in \C^{d_1}$ with entries $a_j := 
  \lambda_j^{1/2}  \langle \bw_1,\bu_j\rangle$. Then $0 < M \leq \one$  (with equality on the right if $d_3= d_2$), and therefore,
$\langle \bw_1\otimes \Phi, A \bw_1\otimes \Phi\rangle = \langle {\bf a},M{\bf a}\rangle \leq \|{\bf a}\|^2 =  \langle \bw_1,\rho_1\bw_1\rangle$.
\end{proof}

The original proof of strong subadditivity \cite{LR} showed that strong subadditivity, the joint convexity of the relative entropy, and the 
concavity of the conditional entropy were all equivalent, and then used the convexity results from \cite{L73} to prove strong subadditivity.  The proof of strong subadditivity using Theorem~\ref{LKHTh} sidesteps deep convexity issues altogether, and indeed, 
using the  equivalence results, provides simple proofs of the joint convexity of the relative entropy, and the 
concavity of the conditional entropy. We thank Trung Nghia Nguyen for pointing out a flaw in an earlier proof of the main lemma.

We close by showing that there are no cases of equality for the operator inequality \eqref{LKH1}. 

\begin{thm}\label{Cases of equality} If ${\rm dim}(\cH_2) > 1$, there are no density matrices $\rho_{12}$ and $\sigma_{23}$ for which there is equality in \eqref{LKH1}. 
\end{thm}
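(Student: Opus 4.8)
The plan is to show that equality in \eqref{LKH1} would force $\sigma_2 := \tr_2[\sigma_{23}]$ \dots\ more precisely $\sigma_2 := \tr_3[\sigma_{23}]$, the reduced state on $\cH_2$, to be an \emph{invertible} density matrix satisfying $\tr[\sigma_2^{-1}] = 1$, and this is impossible the moment $\dim(\cH_2) > 1$. Throughout I assume, as in Theorem~\ref{LKHTh}, that $\rho_{12}$ is invertible, so that both sides of \eqref{LKH1} are genuinely defined; note that then $\rho_1 = \tr_2[\rho_{12}]$ is invertible as well.

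First I would suppose for contradiction that equality holds, i.e.
\[
\rho_1^{-1}\otimes \sigma_{23} = \rho_{12}^{-1}\otimes \sigma_3
\]
as positive operators on $\cH_1\otimes\cH_2\otimes\cH_3$. The mechanism of the proof is simply that partial traces preserve this operator equality while collapsing the tensor structure. Taking the partial trace over $\cH_3$ and using $\tr[\sigma_3]=1$ turns the left-hand side into $\rho_1^{-1}\otimes \sigma_2$ and the right-hand side into $\rho_{12}^{-1}$, giving
\[
\rho_1^{-1}\otimes \sigma_2 = \rho_{12}^{-1} \quad\text{on } \cH_1\otimes\cH_2 .
\]
Since the right-hand side is invertible and $\rho_1^{-1}$ is invertible, the factor $\sigma_2$ must be invertible, and inverting both sides yields the product form $\rho_{12} = \rho_1\otimes\sigma_2^{-1}$.

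Next I would extract the contradiction by taking the partial trace of this last identity over $\cH_2$. Because $\tr_2[\rho_{12}] = \rho_1$, this gives $\rho_1 = \rho_1\cdot\tr[\sigma_2^{-1}]$, and since $\rho_1\neq 0$ we conclude $\tr[\sigma_2^{-1}] = 1$. But $\sigma_2$ is a density matrix, so its eigenvalues $p_1,\dots,p_{d_2}>0$ sum to $1$, and Cauchy--Schwarz gives $d_2^2 = \big(\sum_i \sqrt{p_i}\,p_i^{-1/2}\big)^2 \leq \big(\sum_i p_i\big)\big(\sum_i p_i^{-1}\big) = \tr[\sigma_2^{-1}]$. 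Hence $\tr[\sigma_2^{-1}] \geq d_2^2 > 1$ whenever $d_2 = \dim(\cH_2) > 1$, contradicting $\tr[\sigma_2^{-1}]=1$.

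I do not anticipate any serious obstacle: the argument is purely structural, using only that the partial trace respects operator equality together with an elementary spectral estimate. The single point deserving care is the reading of ``equality in \eqref{LKH1}'' as equality of the two positive operators, under the standing invertibility hypothesis on $\rho_{12}$ that legitimizes every inverse above. It is worth remarking that the same computation explains why $\dim(\cH_2)=1$ is genuinely exceptional: there $\sigma_2 = 1$, the identity $\tr[\sigma_2^{-1}]=1$ holds automatically, and \eqref{LKH1} degenerates into an equality.
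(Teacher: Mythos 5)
Your proof is correct, and it takes a genuinely different route from the paper's. The paper first discards $\ker(\sigma_3)$ so that $\sigma_3$ can be assumed invertible, rewrites the hypothesized equality as $X = Y^{-1}$ with $X = \sigma_3^{-1/2}\sigma_{23}\sigma_3^{-1/2}$ and $Y = \rho_1^{-1/2}\rho_{12}\rho_1^{-1/2}$, and then invokes a structural fact: since $X$ acts trivially on $\cH_1$ and $Y$ acts trivially on $\cH_3$, both must have the form $\one\otimes A\otimes\one$ and $\one\otimes B\otimes\one$ for positive $A,B$ on $\cH_2$ with $A = B^{-1}$ and $\tr[A]=\tr[B]=1$; the contradiction is that $\tr[B^{-1}]\geq(\tr[B])^{-1}$ is strict unless $\dim\cH_2=1$. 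You instead hit the operator equality directly with partial traces: $\tr_3$ collapses it to $\rho_{12}^{-1}=\rho_1^{-1}\otimes\sigma_2$, hence $\rho_{12}=\rho_1\otimes\sigma_2^{-1}$, and $\tr_2$ then forces $\tr[\sigma_2^{-1}]=1$, contradicting the Cauchy--Schwarz bound $\tr[\sigma_2]\,\tr[\sigma_2^{-1}]\geq d_2^2$. Both arguments end with essentially the same elementary trace inequality, but your reduction is more economical: it never inverts $\sigma_3$ or $\sigma_{23}$ (so no kernel-discarding step is needed), and it replaces the paper's tensor-structure lemma by the triviality that equal operators have equal partial traces, with the invertibility of $\sigma_2$ extracted for free from the invertibility of $\rho_{12}^{-1}$. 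What the paper's sandwiching buys in exchange is slightly more structural information along the way---equality would force both $\rho_{12}=\rho_1\otimes B$ and $\sigma_{23}=A\otimes\sigma_3$ to be product states---whereas you extract only the product form of $\rho_{12}$, which is all that is needed. Your closing observation that $\dim\cH_2=1$ genuinely yields equality is a correct sanity check showing the dimensional hypothesis is sharp.
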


\begin{proof}Suppose there is equality in \eqref{LKH1}. Discarding  the kernel of $\sigma_3$ from $\cH_3$ as needed, we may suppose 
$\sigma_3$ is invertible. (Note that as operators on $\cH_2\otimes \cH_3$, ${\rm ker}(\sigma_3) \subseteq {\rm ker}(\sigma_{23})$). Then with equality in \eqref{LKH1}, $\sigma_{23}$ is in fact invertible, and of course $\rho_1$ is invertible whenever $\rho_{12}$ is invertible. 

Define $X = \sigma_3^{-1/2} \sigma_{23}\sigma_3^{-1/2}$ and $Y = 
  \rho_1^{-1/2}\rho_{12}\rho^{-1/2}$. Then as operators on $\cH_1\otimes\cH_2\otimes \cH_3$,
\begin{equation}\label{LKH30}
X = Y^{-1} 
\end{equation}
Since $X$ acts trivially on $\cH_1$ and $Y$ acts trivially on $\cH_3$, there are $A,B\in \cB^+(\cH_2)$ such that $X = \one\otimes A\otimes  \one$ and 
$Y = \one\otimes B\otimes\one $. 
From the definitions of $X$ and $Y$,  $\tr_{2}[X] = \tr_{2}[Y] = \one$ on $\cH_1\otimes \cH_3$, and this means that $\tr_2[A] = \tr_2[B] =1$. 
However then $\tr_2[B^{-1}] \geq (\tr_2[B])^{-1}$, and equality is only possible if $\cH_2$ is one dimensional.  
Hence equality in \eqref{LKH1} is not possible when $\cH_2$ is non-trivial. 
\end{proof}

\section{Declarations} Conflicts of Interest: There is no conflict of interest. Data Availability: there is no data associated to this paper.

\end{document}